\newtheorem{theorem}{Theorem}
\newtheorem{proposition}{Proposition}
\newenvironment{proof}{\par \noindent \textit{Proof (Sketch)}.}{\hfill $\Box$\newline}
\title{Polynomial enumeration of chordless cycles on cyclically orientable graphs}
 \author{Diane Castonguay  \and
       \speaker{Elis\^{a}ngela Silva Dias}\thanks{Partially supported by FAPEG -- Funda\c{c}\~{a}o de Amparo \`{a} Pesquisa do Estado de Goi\'{a}s}}
 \institute{{\it \{diane,elisangela\}@inf.ufg.br} \\Instituto de Inform\'{a}tica -- Universidade Federal de Goi\'{a}s -- UFG}
\begin{document}

\maketitle

In a finite undirected simple graph, a {\it chordless cycle} is an induced subgraph which is a cycle. A graph is called cyclically orientable if it admits an orientation in which every chordless cycle is cyclically oriented. We propose an algorithm to enumerate all chordless cycles of such a graph. Compared to other similar algorithms, the proposed algorithm have the advantage of finding each chordless cycle only once in time complexity $\mathcal{O}(n^2)$ in the input size, where $n$ is the number of vertices.

\section{Introduction}
\label{sec:Introduction}

Given a finite undirected simple graph $G$, a {\it chordless cycle} is an induced subgraph that is a cycle. 
A chordless cycle with four of more edges is called {\it hole}.

A solution to the problem of determining if a graph contains a chordless cycle of length $k \geq 4$, for some fixed value of $k$, was proposed by Hayward~\cite{H1987}. Golumbic~\cite{G1980} proposed an algorithm to recognize chordal graphs, that is, graphs without any chordless cycles. The case for $k \geq 5$ was settled by Nikolopoulos and Palios~\cite{NP2007}. 

It is important observe that to find a unique chordless cycle of length $k$ is easier than to enumerate all chordless cycles in a graph $G$. However, enumeration is a fundamental task in computer science and many algorithms have been proposed for enumerating graph structures such as 
cycles~\cite{RT1975,W2008}, circuits~\cite{B2010,T1973}, paths~\cite{HH2006,RT1975}, trees~\cite{KR2000,RT1975} and cliques~\cite{MU2004,TTT2006}. Due to the number of cycles -- which can be exponentially large -- these kind of tasks are usually hard to deal with, since even a small graph may contain a huge number of such structures. 




An algorithm to enumerate chordless cycles, with $\mathcal{O}(n + m)$ time complexity in the output size, was proposed by Uno and Satoh~\cite{U2014} and each chordless cycle will appears more than once in the output. Actually, each cycle will appear as many times as its length. Thus, the algorithm has $\mathcal{O}(n \cdot (n + m))$ time complexity in size of the sum of lengths of all the chordless cycles in the graph.

Dias et al.~\cite{DCLJ2014} proposed two algorithms to enumerate all chordless cycles of a given graph $G$, with $\mathcal{O}(n+m)$ time complexity in the output size, with the advantage of finding each chordless cycle only once. The core idea of algorithms is to use a vertex labeling scheme, with which any arbitrary cycle can be described in a unique way. With this, they generate an initial set of vertex triplets and use a DFS strategy to find all the chordless cycles. 

Cyclically orientable (CO) graphs are introduced by Barot et al. in~\cite{BGZ2006}. A graph $G$ is CO if it admits an orientation in which any chordless cycle is cyclically oriented. Such an orientation is also called cyclic. The authors obtained several nice characterizations of CO-graphs, being motivated primarily by their applications in cluster algebras. Gurvich~\cite{G2008} and Speyer~\cite{S2005} obtained several new characterizations that provide algorithms for recognizing CO-graphs and obtaining their cyclic orientations in linear time. For CO-graphs, we show that the amount of chordless cycles is polynomial in the input size.

We present an algorithm that verifies whether the given graph is cyclically orientable and, in positive case, that enumerates all chordless cycles in polynomial time.

The remainder of the paper is organized as follows: some preliminaries definitions and comments are presented in Section~\ref{sec:Preliminaries}; our algorithm are introduced in Section~\ref{sec:Algorithm}; Section~\ref{sec:Correctness} shows the correctness of the algorithm and analyze time and space complexity; finally, in Section~\ref{sec:Conclusions} we draw our conclusions.

\section{Preliminaries}
\label{sec:Preliminaries}

In this section, we present the mathematical definitions that support our approach to enumerate all chordless cycles of a cyclically orientable graph.

Let $G = (V(G), E(G))$ be a finite undirected simple graph with vertex set $V(G)$ and edge set $E(G)$. Let $n=|V(G)|$ and $m=|E(G)|$. We denote by $Adj(x) = \{y \in V(G) | (x, y) \in E(G)\}$.

A \textit{simple path} is a finite sequence of vertices $\langle v_1, v_2, \dots, v_k \rangle$ such that $(v_i, v_{i+1})\allowbreak \in E(G)$ and no vertex appears repeated in the sequence, that is,  $v_i\neq v_j$, for $i = 1, \dots, k-1$, $j = 1, \dots, k$ and $j\neq i$. A \textit{cycle} is a simple path $\langle v_1, v_2, \dots, v_k \rangle$ such that $(v_k, v_1)\in E(G)$. Note that our definition of cycle, as in~\cite{DCLJ2014}, does not repeat the first vertex at the end of the sequence as usually done by other authors. A {\it chord} of a path (resp. cycle) is an edge between two vertices of the path (cycle), that is not part of the path (cycle). A path (cycle) without chord is called a {\it chordless path (chordless cycle)}.


A graph $G$ is called \textit{connected} when there exists a path between each pair of vertices of $G$, otherwise $G$ is called {\it disconnected}. A {\it connected component} of a graph $G$ is a maximal connected subgraph of $G$. A graph is called {\it two-connected} if it is connected and is necessary the elimination of at least two vertices to disconnected it.

Two-connected components are important because any chordless cycle is contained in exactly one. To calculate them, we can use an algorithm based in Szwarcfiter~\cite{S1988} ideas, that has time complexity $\mathcal{O}(n^2)$.

For better understanding of this work, we will present a theorem and a proposition that is used in our algorithm.


\begin{theorem} [Speyer \cite{S2005}] \label{teo1_Speyer}
A graph $G$ is cyclically orientable if and only if all of its two-connected
components are. A two-connected graph is cyclically orientable if and only if it is either a cycle, a single edge, or of the form $G' \cup C$ where $G'$ is a cyclically orientable graph, $C$ is a cycle and $G'$ and C meet along a single edge. Moreover, if $G=G' \cup C$ is any such decomposition of $G$ into a cycle and a subgraph meeting along a single edge, then $G$ is cyclically orientable if and only if $G'$ is.
\end{theorem}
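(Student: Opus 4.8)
The plan is to prove the three assertions in order: first the block statement, then the recursive ``moreover'' equivalence $G=G'\cup C$, and finally the structural characterisation, which will rest on the equivalence. I would begin with the block decomposition. As already noted, every chordless cycle lies in exactly one two-connected component, and distinct components share only cut vertices and no edges, so the edge set of $G$ is partitioned among the blocks. Hence an orientation of $G$ is nothing but an independent choice of orientation on each block, and its restriction to a block is cyclic precisely when that block's chordless cycles are cyclically oriented. Since the chordless cycles of $G$ are exactly the union of the chordless cycles of the blocks, $G$ admits a cyclic orientation if and only if each block does, which is the first sentence.

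Next I would settle the base cases and the forward direction of the equivalence. A single edge contains no cycle and is vacuously CO, while a cycle has itself as its only chordless cycle and is cyclic under the obvious orientation. For the decomposition, write $C=P\cup\{e\}$, where $e=(u,v)$ is the shared edge and $P$ is a $u$--$v$ path whose internal vertices lie outside $G'$; because $C$ meets $G'$ only in $e$, each such internal vertex has degree $2$ in $G$. Consequently no ear edge can be a chord of a cycle living in $G'$, so every chordless cycle of $G'$ remains a chordless cycle of $G$. Restricting any cyclic orientation of $G$ to $G'$ therefore yields a cyclic orientation of $G'$, proving that $G$ CO implies $G'$ CO.

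The reverse direction is the heart of the ``moreover.'' Given a cyclic orientation $\mathcal{O}'$ of $G'$ with $e$ directed $u\to v$, I would extend it by orienting $P$ so that $C$ becomes a directed cycle. To check this extension is cyclic, I classify the chordless cycles of $G$: since each internal vertex of $P$ has degree $2$, any cycle that meets the interior of $P$ must traverse all of $P$, and closing it up inside $G'$ either uses $e$ (giving exactly $C$) or uses a $u$--$v$ path of length at least $2$, in which case $e$ is a chord and the cycle is not chordless. Thus the chordless cycles of $G$ are precisely those of $G'$ together with $C$ itself, and all of them are cyclically oriented; hence $G$ is CO. Combined with the base cases, this already yields the ``if'' direction of the two-connected characterisation.

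Finally, for the ``only if'' direction it suffices to exhibit, for a two-connected CO graph $G$ that is neither a single edge nor a cycle, \emph{one} decomposition $G=G'\cup C$ with $C$ a cycle meeting $G'$ along a single edge; the CO-ness of $G'$ is then automatic from the forward direction just proved. Concretely I would pass to the reduced multigraph on the vertices of degree at least $3$ (the branch vertices), with maximal degree-$2$ paths as edges, and seek two branch vertices joined both by a genuine edge $e$ and by a degree-$2$ ear $P$, so that $C=P\cup\{e\}$ is chordless and peelable. The hard part is exactly the existence of such a peelable cycle, and this is where cyclic orientability is essential: a generic two-connected graph need not contain one (for instance $K_4$, which has no degree-$2$ vertex and is not CO). I expect to establish existence by contradiction, showing that the absence of a peelable cycle forces a denser branch/ear configuration --- morally a $K_4$-subdivision or worse --- that admits no cyclic orientation, contradicting the hypothesis.
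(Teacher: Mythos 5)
The theorem you were asked to prove is one the paper itself does not prove: it is quoted verbatim from Speyer~\cite{S2005} as a preliminary, so the benchmark is Speyer's own argument rather than anything in this paper. Against that benchmark, the first three quarters of your proposal are sound and essentially follow the standard route: the block reduction is correct (blocks are induced subgraphs, partition the edges, and carry all chordless cycles, so orientations restrict and glue independently); the base cases are trivial; and both directions of the ``moreover'' equivalence are handled properly, including the key classification that the chordless cycles of $G=G'\cup C$ are exactly those of $G'$ together with $C$, because any cycle through the interior of the ear either closes via $e$ (giving $C$) or acquires $e$ as a chord.

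The genuine gap is the ``only if'' direction of the two-connected characterisation, which is exactly the hard part of Speyer's theorem, and your final paragraph only announces an expectation rather than giving an argument. Two concrete holes: (a)~you never establish that a two-connected CO graph other than a cycle or an edge has a degree-$2$ vertex at all, which your branch/ear reduction presupposes; note that the edge bound of Proposition~\ref{prop:LimiteDeArestas} cannot supply this, since $K_{3,3}$ has exactly $2n-3$ edges and minimum degree $3$. (b)~Your proposed contradiction --- ``no peelable cycle forces a $K_4$-subdivision or worse, which admits no cyclic orientation'' --- is itself an unproven claim, and the second half of it is genuinely problematic: a subdivision of $K_4$ (or of the theta graph, which is the obstruction actually relevant to an ear failing to close with an edge, and is not a $K_4$-subdivision) occurs in $G$ only as a \emph{subgraph}, not an induced subgraph. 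Its cycles may have chords in $G$, in which case a cyclic orientation of $G$ imposes no constraint on them, so the three-path orientation-conflict argument that kills the induced theta graph simply does not apply. Closing this direction requires the more delicate analysis Speyer carries out (producing, inside an arbitrary two-connected CO graph, a chordless cycle all of whose vertices but two adjacent ones have degree $2$), and without it the characterisation, and hence the correctness of Algorithm~\ref{alg:CiclicamenteOrientado}, is not established.
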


\begin{proposition} [Speyer~\cite{S2005}] \label{prop:LimiteDeArestas}
Let $G$ be a cyclically orientable graph with $n$ vertices. Then $G$ has at
most $2 \cdot n - 3$ edges.
\end{proposition}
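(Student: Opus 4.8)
The plan is to prove Proposition~\ref{prop:LimiteDeArestas} by induction on the number of vertices $n$, using the structural characterization of two-connected cyclically orientable graphs given in Theorem~\ref{teo1_Speyer}. Since the edge bound should be additive over the pieces of a graph, I would first reduce the general case to the two-connected case and then exploit the recursive decomposition $G = G' \cup C$.

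First I would handle the reduction to two-connected components. If $G$ is cyclically orientable, then by Theorem~\ref{teo1_Speyer} each of its two-connected components is cyclically orientable. I would argue that it suffices to bound the edges within each two-connected component, because every edge lies in exactly one such component (edges not in any cycle are single-edge components). Summing a bound of the form $2n_i - 3$ over the components and accounting for shared cut vertices should yield the global bound $2n - 3$; I would verify that the overlaps (shared vertices but no shared edges) only improve the bound, so the worst case is a single two-connected graph.

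For the two-connected case I would induct on $n$ following the three cases of Theorem~\ref{teo1_Speyer}. In the base cases, a single edge has $n=2$ and $m=1 \le 2\cdot 2 - 3 = 1$, and a cycle on $n$ vertices has $m = n \le 2n-3$ for $n \ge 3$. For the inductive step, write $G = G' \cup C$ where $C$ is a cycle, $G'$ is cyclically orientable, and the two meet along a single edge. Let $G'$ have $n'$ vertices, so by the induction hypothesis $G'$ has at most $2n' - 3$ edges. The cycle $C$ shares exactly one edge (hence two vertices) with $G'$, so if $C$ has length $\ell$ it contributes $\ell - 2$ new vertices and $\ell - 1$ new edges. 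Then $n = n' + (\ell - 2)$ and the edge count is at most $(2n' - 3) + (\ell - 1) = 2n' + \ell - 4 = 2(n' + \ell - 2) - \ell = 2n - \ell$. Since $\ell \ge 3$, this is at most $2n - 3$, completing the induction.

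The main obstacle I anticipate is the bookkeeping in the reduction step rather than the inductive step itself: one must check that decomposing into two-connected components and summing the per-component bounds genuinely yields $2n-3$ globally, since cut vertices are counted in more than one component. I would resolve this by summing $\sum_i (2n_i - 3)$ over the $c$ components and using $\sum_i n_i = n + (\text{number of repeated cut-vertex incidences})$ together with the block-tree structure; because each additional block beyond the first attaches at a cut vertex, the $-3$ terms provide enough slack to absorb the double-counted vertices, and the aggregate stays at or below $2n-3$. The cycle case $\ell \ge 3$ is what makes the inductive inequality tight exactly for cycles, which is the reassuring sanity check that the bound is sharp.
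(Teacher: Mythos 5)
The paper does not actually prove this proposition: it is imported directly from Speyer~\cite{S2005}, so there is no internal proof to compare against. Your argument is a correct, self-contained reconstruction from Theorem~\ref{teo1_Speyer}, and it is essentially the natural (and, in substance, Speyer's own) induction: decompose $G = G' \cup C$, count $m \le (2n'-3) + (\ell - 1) = 2n - \ell \le 2n - 3$, with cycles witnessing tightness. Two details deserve tightening. First, in your two-connected induction you apply the induction hypothesis to $G'$, but your hypothesis is stated only for two-connected graphs, and Theorem~\ref{teo1_Speyer} does not assert that $G'$ is two-connected. This is fixable in either of two ways: (a) observe that $G'$ is indeed two-connected or a single edge, because any cut vertex $w$ of $G'$ would remain a cut vertex of $G$ --- the cycle $C$ meets $G'$ only along the edge $(u,v)$, and the endpoints $u,v$ lie in a single component of $G' - w$ (they are adjacent, or one of them is $w$), so attaching $C$ cannot reconnect $G - w$; or (b) run one strong induction on $n$ over \emph{all} CO graphs, treating the non-two-connected case by the block decomposition and the two-connected case by the decomposition of Theorem~\ref{teo1_Speyer}; then the hypothesis applies to $G'$ merely as a CO graph with fewer vertices. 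Second, your block-summation bookkeeping is right and can be made exact rather than heuristic: for a connected graph with blocks $B_1, \dots, B_c$ on $n_1, \dots, n_c$ vertices, the block-cut-tree identity gives $\sum_i (n_i - 1) = n - 1$, hence $\sum_i (2 n_i - 3) = 2(n-1) - c \le 2n - 3$ whenever $c \ge 1$, with even more slack for disconnected graphs. With those two points pinned down, your proof is complete and is a genuine improvement on the paper, which leaves the reader to consult Speyer for the argument.
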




\section{The proposed algorithm}
\label{sec:Algorithm}

Based in theorems and propositions described by Speyer~\cite{S2005}, the Algorithm~\ref{alg:CiclicamenteOrientado} is able to verify if a given graph $G$ is cyclically orientable and, in positive case, to return all chordless cycles as we show in Theorem~\ref{teo:CO_CC}.

The work of the algorithm is based in the analysis of each two-connected component found to a given graph as input. Following exactly the idea of Theorem~\ref{teo1_Speyer}, the algorithm identifies, for a two-connected component, chordless cycles. This is done aiming to reduce the initial two-connected components to a unique cycle.

The Algorithm~\ref{alg:CiclicamenteOrientado}, initially, verifies if the given graph meets the Proposition~\ref{prop:LimiteDeArestas}, that is, if the graph has $2 \cdot n - 3$ edges. If not, it returns NO. Next, it finds all two-connected components and it also verifies if each component meets the Proposition~\ref{prop:LimiteDeArestas} or if the vertices of graph not have vertices with degree two. If one this conditions are not satisfied, it returns NO.

After to do the preliminaries verifications, the algorithm storages in a queue $F$ all vertices of degree two to each two-connected component. Vertices are removed and new are added to $F$ as soon as the algorithm runs. To add and to remove elements of $F$ takes time $\mathcal{O}(1)$. This continues to occur until all vertices of degree equal two are visited. Observe that if $G$ is CO then all vertices will pass exactly once in $F$.

The algorithm tries, starting in vertices of queue $F$, find and eliminate paths (cycles) up to reduce the initial two-connected component to a cycle and, then, to decide if it is CO. After verify that a two-connected component is CO, the algorithm will analyse the next component. This will continue for all components. In final of process, the given graph will be classified as CO if all two-connected components receive the CO classification; otherwise, the graph is classified as not CO.

The algorithm returns YES if and only if all two-components returns YES. Therefore, given a two-connected graph $G$, it determines, in $\mathcal{O}(n^2)$ complexity time, whether this graph is CO and, then, it returns the set of all chordless cycles $C$ of $G$.


\begin{center}
\begin{minipage}{ \textwidth}
\begin{algorithm2e}[H]
\SetKwFunction{comp}{Two-ConnectedComponents}
 \BlankLine
 \KwIn{An undirected simple graph $G$.}
 \KwOut{Response if $G$ is CO and, if it is, the set $C$ of chordless cycles.}
 \BlankLine

 \eIf{$(|E(G)| > 2 \cdot |V(G)| - 3)$}{
 	\Return NO
 }
 {
	
	\ForEach{$\mbox{two-connected component } G_i \mbox{ of } G$}{
		\If{$(|E(G_i)| > 2 \cdot |V(G_i)| - 3)$}{
 			\Return NO
		}
	}
	
	$C \leftarrow \varnothing$\\

	\ForEach{$\mbox{two-connected component } G_i \mbox{ of } G \mbox{ that is not a single edge}$}{

 		{\bf initialize} the queue $F$ with all vertices of $degree(v) = 2$\\	

		\While{$(F \mbox{ is not empty)}$}{	
			{\bf take} the first element $u$ of queue $F$\\
		
			\If{$(color(u) = white)$}{
				$P \leftarrow \varnothing; \quad y \leftarrow u$\\
				$x \leftarrow a$, such that $a \in Adj(u)$ and $color(a) = white$\\

				
				\While{$((degree(x) = 2) \mbox{ and } (\exists  a \in Adj(x): color(a) = white))$}{
					$F \leftarrow F - \{x\}; \quad color(x) \leftarrow gray$\\
					$P \leftarrow \langle key(x), P \rangle; \quad x \leftarrow a$\\
				}
				
				\While{$((degree(y) = 2) \mbox{ and } (\exists  b \in Adj(y): color(b) = white))$}{
					$F \leftarrow F - \{y\}; \quad color(y) \leftarrow gray$\\
					$P \leftarrow \langle P, key(y) \rangle; \quad y \leftarrow b$\\
				}
				
				\eIf{$(x \neq y)$}{ 
					\eIf{$((x, y) \in E(G_i))$}{
						$C \leftarrow C \cup \langle key(x), P, key(y) \rangle$\\
						$degree(x) \leftarrow degree(x) - 1; \quad degree(y) \leftarrow degree(y) - 1$\\
						\If{$(degree(x) = 2)$}{
							$F \leftarrow F \cup \{x\}$\\
						}
						\If{$(degree(y) = 2)$}{
							$F \leftarrow F \cup \{y\}$\\
						}
					}
					{\tcp{we create a new vertex $w$.}													$Adj(x) \leftarrow Adj(x) \cup \{w\}; \quad Adj(y) \leftarrow Adj(y) \cup \{w\}$\\
							$Adj(w) \leftarrow \{x, y\}; \quad degree(w) \leftarrow 2$\\
							$color(w) \leftarrow white; \quad key(w) \leftarrow P$\\
					}
				}
				{
					$C \leftarrow C \cup \langle key(x), P \rangle$\\
				}
			}	
		}
		\ForEach{$u \in V(G_i)$}{
			\If{$color(u) = white$}{
				\Return NO
			}
		}
	}
}
\Return YES, $C$

\BlankLine

\caption{$ChordlessCyclesCOGraph(G)$ \label{alg:CiclicamenteOrientado}}
\end{algorithm2e}
\end{minipage}
\end{center}

\section{Algorithm analysis}
\label{sec:Correctness}

The correctness of Algorithm $ChordlessCyclesCOGraph(G)$ is divide in two parts. The first one of recognizing if $G$ is CO follows from Speyer~\cite{S2005}. The theorem below complete the correctness of algorithm.

\begin{theorem} \label{teo:CO_CC}
If a graph $G$ is CO, then the Algorithm~\ref{alg:CiclicamenteOrientado} finds all chordless cycles of $G$.
\end{theorem}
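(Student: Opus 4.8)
The plan is to prove correctness by induction on the structure given by Theorem~\ref{teo1_Speyer}, exploiting the recursive decomposition $G = G' \cup C$ where $G'$ is CO and $C$ is a cycle meeting $G'$ along a single edge. Since any chordless cycle lies in exactly one two-connected component, it suffices to argue the claim for a single two-connected CO component $G_i$; the outer \textbf{foreach} loop then handles each component independently. First I would establish the base case: if $G_i$ is a single cycle, then every vertex has degree two, the inner \textbf{while} loops trace the whole cycle from some starting vertex in $F$, and the condition $x = y$ eventually fires, so the cycle is added to $C$ exactly once via the line $C \leftarrow C \cup \langle key(x), P \rangle$.

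For the inductive step I would analyze what happens when the algorithm peels off a degree-two path. Starting from a degree-two vertex $u$, the two \textbf{while} loops extend a maximal chordless path $P$ in both directions until reaching endpoints $x$ and $y$ of degree greater than two (or until the path closes on itself). The key claim is that this traced path, together with the edge $(x,y)$ when it exists, is precisely a chordless cycle $C$ of the form guaranteed by Theorem~\ref{teo1_Speyer}; I would verify that the vertices strictly between $x$ and $y$ all have degree two, so no chord can attach to them, and that the endpoints close the cycle. When $(x,y) \in E(G_i)$, the cycle $\langle key(x), P, key(y) \rangle$ is recorded and $x, y$ have their degrees decremented, corresponding exactly to removing $C$ from $G_i$ and leaving the CO subgraph $G'$. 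When $(x,y) \notin E(G_i)$, the contracted path is replaced by a fresh vertex $w$ whose $key(w) = P$ stores the path, so that chordless cycles of $G'$ passing through the edge $(x,y)$ are later reconstructed with their full vertex sequences substituted back in.

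The main obstacle I expect is showing two things simultaneously: that every chordless cycle of $G_i$ is found, and that none is found twice or spuriously. For completeness I would argue that each chordless cycle of $G_i$ corresponds, under the reduction, either to the cycle peeled off at some stage or to a chordless cycle of the reduced graph $G'$; the induction hypothesis applied to $G'$ then covers the latter, while the key-substitution mechanism guarantees the recovered cycle has the correct original vertices. For the ``only once'' property I would use the coloring: each vertex is colored \emph{gray} as it is consumed by a \textbf{while} loop and is removed from $F$, so it is never reprocessed, and the outer guard $color(u) = white$ prevents re-entry. The delicate point is confirming that the vertex-creation step genuinely preserves the CO structure and the correspondence between chordless cycles before and after contraction — in particular that the new vertex $w$ does not introduce spurious cycles and that distinct chordless cycles of $G'$ lift back to distinct chordless cycles of $G_i$. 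Finally, I would observe that the terminating \textbf{foreach} check over $V(G_i)$ flags any leftover \emph{white} vertex, which can only occur when the reduction fails to collapse the component to a single cycle, i.e. when $G$ is not CO, thereby tying completeness to the recognition guarantee inherited from Speyer~\cite{S2005}.
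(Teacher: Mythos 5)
Your proposal is correct and follows essentially the same route as the paper's own (sketch) proof: after reducing to a single two-connected component, you induct on the peeling process of Algorithm~\ref{alg:CiclicamenteOrientado}, matching its three branches — the recorded cycle when $(x,y) \in E(G_i)$ (where the decomposition $G = G' \cup C$ of Theorem~\ref{teo1_Speyer} applies and the edge $(x,y)$ would be a chord of any other cycle through $P$), the contraction to the keyed vertex $w$ when $(x,y) \notin E(G_i)$, and the terminal case where the component has collapsed to a single cycle. Your treatment is somewhat more explicit than the paper's about completeness, non-duplication, and the lifting of cycles through $key(w)$, but the underlying argument is the same.
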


\begin{proof}
Suppose to $G$ is CO. Since all two-connected components $G_i$ of $G$ are CO, we can assume that $G$ is two-connected. Denote by $G'$ the graph obtained at the end of an iteration. In the first case (line 21), we have that $G =  G' \cup C$. All chordless cycles of $G$ are chordless cycles of $G'$ or equal to $C$, since other cycles that contain vertices of the path $P$ will have a chord $(x, y)$. In the second case (line 29), we have that $G'$ is essentially $G$, since we identify the new vertex $w$ with $P$. In the last case (line 33), the graph $G$ is a cycle which is clearly a chordless cycle.
\end{proof}


The algorithm to determinate all two-connected components has time complexity $\mathcal{O}(m)$, see~\cite{S1988}. Based in Proposition~\ref{prop:LimiteDeArestas}, the algorithm starts testing if $G$ has at most $2 \cdot n - 3$ edges. Therefore, any computation takes time $\mathcal{O}(m)$ and, in fact, has time $\mathcal{O}(n)$. 

Our algorithm uses a boolean function $color(v)$ which assigns the value ``white'' or ``gray'' to all vertices. The ``gray'' vertices are those that we remove from $G$ and will be identify with the new vertex $w$ or will compose a new chordless cycle. If $G$ is CO, then all vertices will enter at some stage in $F$ and afterwards will be colored with ``gray''. The algorithm has $\mathcal{O}(n)$ steps and it resolves recursively the same problem, using DFS. The DFS algorithm has time complexity $\mathcal{O}(n+m)$. Therefore, the Algorithm~\ref{alg:CiclicamenteOrientado} has time complexity $\mathcal{O}(n^2)$.

\section{Conclusions}
\label{sec:Conclusions}

We presented an algorithm easy to follow to enumerate all chordless cycles in a CO-graph, that has time complexity $\mathcal{O}(n^2)$. The core idea is to go diminishing the given graph, listing the chordless cycles in the process, until we have just a single cycle and, then, we know that the graph is CO.

%


\begin{thebibliography}{}
\small

\bibitem{BGZ2006} Barot, M.; Geiss, C.; Zelevinsky, A.: Cluster Algebras of Finite Type and Positive Symmetrizable Matrices. J. London Math. Soc. (73) 545--564 (2006).

\bibitem{B2010} Bisdorff, R.: On Enumerating Chordless Circuits in Directed Graphs, {\it http://sma.uni.lu/bisdorff/ChordlessCircuits/documents/chordlessCircuits.pdf} (2010).



\bibitem{DCLJ2014} Dias, E.S.; Castonguay, D.; Longo, H.; Jradi, W.A.R.:Efficient Enumeration of All Chordless Cycles in Graphs. Available in \url{http://arxiv.org/abs/1309.1051} (2013).



\bibitem{G1980} Golumbic, M.C.: Algorithm Graph Theory and Perfect Graphs, Ac. Press (1980).

\bibitem{G2008} Gurvich, V.: On Cyclically Orientable Graphs. Discrete Math., (308) 129--135 (2008).

\bibitem{HH2006} Haas, R., Hoffmann, M.: Chordless Paths Through Three Vertices, Theor. Comput. Sci. 351, (3) 360--371 (2006).

\bibitem{H1987} Hayward, R.B.: Two Classes of Perfect Graphs. Ph.D. Thesis, School of Computer Science, McGill Univ. (1987).

\bibitem{KR2000} Kapoor, S., Ramesh, H.: An Algorithm for Enumerating All Spanning Trees of a Directed Graph, Algorithmica 27, 120--130 (2000).



\bibitem{MU2004} Makino, K., Uno, T.: New Algorithms for Enumerating All Maximal Cliques, LNCS 3111 (Proc. SWAT 2004), 260--272 (2004).

\bibitem{NP2007} Nikolopoulos, S.D., Palios, L.: Detecting Holes and Antiholes in Graphs, Algorithmica 47, 119--138 (2007).


\bibitem{RT1975} Read, R.C., Tarjan, R.E.: Bounds on Backtrack Algorithms for Listing Cycles, Paths and Spanning Trees, Networks 5, 237--252 (1975).


\bibitem{S1988} Szwarcfiter, J.L.: Grafos e Algoritmos Computacionais, 2nd edition. Ed. Campus, Rio de Janeiro (1998).


\bibitem{S2005} Speyer, D.E.: Cyclically Orientable Graphs. Available at \url{http://arxiv.org/pdf/math/0511233v1.pdf}, 1--9 (2005).


\bibitem{T1973} Tarjan, R.E.: Enumeration of the Elementary Circuits of a Directed Graph, SIAM J. Comput. 2, 211--216 (1973).

\bibitem{TTT2006} Tomita, E., Tanaka, A., H. Takahashi: The Worst-case Time Complexity for Generating All Maximal Cliques and Computational Experiments, Theo. Comp. Sci. 363, 28--42 (2006).


\bibitem{U2014} Uno, T., Satoh, H.: An Efficient Algorithm for Enumerating Chordless Cycles and Chordless Paths, {\it arXiv:1404.7610v1}, 1--12 (2014).

\bibitem{W2008} Wild, M.: Generating all Cycles, Chordless Cycles, and Hamiltonian Cycles with the Principle of Exclusion, J. Discrete Algorithms 6(1), 93--102 (2008).

\end{thebibliography}
\end{document}